\title{Symbol-Level Precoding Design for Max-Min SINR in Multiuser MISO Broadcast Channels}
\author{Alireza~Haqiqatnejad,~Farbod~Kayhan,~and Bj\"{o}rn~Ottersten\\
Interdisciplinary Centre for Security, Reliability and Trust (SnT),~University of Luxembourg \\
email: \{alireza.haqiqatnejad,farbod.kayhan,bjorn.ottersten\}@uni.lu}
\newtheorem{theorem}{Theorem}
\newtheorem{lemma}{Lemma}
\newtheorem{property}{Property}
\newcommand{\Deee} {\mathrm{\pmb{\Delta}}}
\newcommand{\HHH} {\mathrm{\pmb{H}}}
\newcommand{\bbb}{\mathrm{\pmb{b}}}
\newcommand{\A}{\mathrm{\pmb{A}}}
\newcommand{\aaa}{\mathrm{\pmb{a}}}
\newcommand{\h}{\mathrm{\pmb{h}}}
\newcommand{\s}{\mathrm{\pmb{s}}}
\newcommand{\vvv}{\mathrm{\pmb{v}}}
\newcommand{\conv}{\mathrm{\pmb{conv}}}
\newcommand{\bd}{\mathrm{\pmb{bd}}}
\newcommand{\interior}{\mathrm{\pmb{int}}}
\newcommand{\ccc}{\mathrm{\pmb{c}}}
\newcommand{\x}{\mathrm{\pmb{x}}}
\newcommand{\yyy}{\mathrm{\pmb{y}}}
\newcommand{\uuu}{\mathrm{\pmb{u}}}
\newcommand{\C}{\mathbb{C}}
\begin{document}
\onecolumn
\maketitle

\IEEEpeerreviewmaketitle

\begin{abstract}
In this paper, we address the symbol level precoding (SLP) design problem under max-min SINR criterion in the downlink of multiuser multiple-input single-output (MISO) channels. First, we show that the distance preserving constructive interference regions (DPCIR) are always polyhedral angles (shifted pointed cones) for any given constellation point with unbounded decision region. Then we prove that any signal in a given unbounded DPCIR has a norm larger than the norm of the corresponding vertex if and only if the convex hull of the constellation contains the origin. Using these properties, we show that the power of the noiseless received signal lying on an unbounded DPCIR is an strictly increasing function of two parameters. This allows us to reformulate the originally non-convex SLP max-min SINR as a convex optimization problem. We discuss the loss due to our proposed convex reformulation and provide some simulation results.
\end{abstract}

\begin{IEEEkeywords}
Distance preserving constructive interference region,  max-min SINR, multiuser MISO, symbol-level precoding.
\end{IEEEkeywords}



\section{Introduction} \label{sec:intro}
Multiuser interference (MUI) is a major performance limiting factor in multiuser systems, which reduces the maximum reliable transmission rate of individual users. One approach to mitigate the MUI is to precompensate for its undesired effect on the received signal through some signal processing at the transmitter \cite{tb_sol_str}, which is known as multiuser precoding.

The multiuser precoding problem is usually expressed as a constrained optimization problem (see \cite{tb_sol_str},\cite{tb_coor} and the references therein). In general, the design problem aims at keeping a balance between some network-centric and user-centric objectives/requirements, depending on the network's operator strategy. Power and sum-rate are commonly regarded as network-centric criteria \cite{tb_coor}. 
On the other hand, as a user-centric criterion, signal-to-interference-plus-noise ratio (SINR) is an effective measure of quality-of-service (QoS) in multiuser interference channels \cite{tb_mul}. In particular, both bit error rate (BER) and capacity, which are two relevant criteria from a practical point of view, are closely related with maximizing SINR. Taking into account different types of optimization criteria, some well-known formulations for the multiuser precoding problem are power minimization with SINR constraints \cite{tb_sinr}, SINR balancing \cite{tb_conic}, and (weighted) sum-rate maximization \cite{tb_coor}. In this paper, we mainly focus on the SINR balancing problem through ensuring max-min fairness among users.

Conventional linear multiuser precoding techniques try to design the precoder in order to mitigate the MUI. This requires the knowledge of the instantaneous channel state information (CSI) to calculate the precoder matrix. Following the notion of constructive interference (CI), one can turn the MUI into a useful source of signal power instead of treating it as an unwanted distortion \cite{slp_chr}. Accordingly, in addition to CSI, the instantaneous data information (DI) of all users are used to design the precoder, which leads to introducing symbol-level precoding (SLP) \cite{slp_con}. When compared to conventional schemes, it has been shown that significant gains can be achieved, but at the cost of higher transmitter complexity \cite{slp_chr}. In SLP scheme, one may also form a virtual multicast formulation to directly find the optimal transmit vector, as proposed in \cite{slp_con}, instead of finding the precoder matrix.

The SINR balancing problem in multiuser multiple-input single-output (MISO) channels has been extensively investigated for conventional precoding techniques, and addressed in both multicast (single data stream) and broadcast (independent data streams) downlink scenarios. The problem is not convex in general and hence, alternate optimization approaches have been proposed. The authors in \cite{tb_mul} prove that the max-min SINR problem for downlink multicasting is NP-hard; however, it can be solved approximately through semidefinite relaxation. For downlink broadcast channels, it is shown in \cite{tb_conic} that the power minimization and the max-min SINR are inverse problems. Using this property, the SINR optimization problem can be solved through iteratively solving the power optimization. Furthermore, direct solutions for max-min SINR are provided in \cite{tb_conic} via conic optimization, where the problem is formulated as a quasiconvex standard generalized eigenvalue program (GEVP). Using the concept of uplink-downlink duality, in \cite{tb_sinr} it is verified that the global optimum of max-min SINR is equivalently obtained from solving a dual uplink problem, which has an easier-to-handle analytical structure.

Concerning SLP design in multiuser downlink channels, the optimization constraints push each user's (noiseless) received signal to a predefined region, called constructive interference region (CIR), enhancing (or guaranteeing a certain level of) the detection accuracy. This causes the constraints to depend on both the constellation set and the decision regions. In \cite{slp_con}, the non-convex SLP max-min SINR problem is solved using its relation to the power minimization via a bisection search. The proposed method is restricted to PSK constellations 
and has high computational complexity. This problem is also studied in \cite{slp_chr} and an alternate convex formulation is provided for PSK constellations. 
However, there is no general solution method or convex formulation for the SLP max-min SINR problem being valid for generic constellations of any order and shape.

In this paper, our goal is to find alternate convex formulations for the originally non-convex SLP max-min SINR problem, based on the definition of distance preserving constructive interference regions (DPCIR) \cite{slp_gen}. To obtain such reformulation, we first show that any DPCIR associated with a boundary constellation point is always a polyhedral angle, and hence unbounded. Then, we prove that any signal in a given unbounded DPCIR has a norm larger than the corresponding vertex under the necessary and sufficient condition that the constellation contains the origin in its convex hull. Based on these two results, we derive two alternate convex formulations for the SLP max-min SINR. This is done by noticing that the noise-free received signal at each user's receiver is an increasing function of two parameters. 

The remainder of this paper is organized as follows. In Section \ref{sec:sysmodel}, we describe our system model. In Section \ref{sec:cir}, we overview the DPCIRs and their properties. We discuss the SLP max-min SINR in Section \ref{sec:slp} and reformulate it as alternate convex problems. In Section \ref{sec:sim}, we provide some simulation results. Finally, we conclude the paper in Section \ref{sec:conc}.

{\bf{Notations:}} 
We use uppercase and lowercase bold-faced letters to denote matrices and vectors, respectively, and lowercase normal letters to denote scalars. For matrices and vectors, $[\cdot]^T$ denotes the transpose operator. For vectors, $\|\cdot\|$ represents the $l_2$ norm, and $\succeq$ (or $\succ$) denotes componentwise inequality. $\Re\{\cdot\}$ and $\Im\{\cdot\}$ denote the respectively real-part and imaginary-part operators. For any set $\mathcal{A}$, $|\mathcal{A}|$ denotes the cardinality of $\mathcal{A}$.


\section{System Model} \label{sec:sysmodel}

We consider the downlink of a multiuser MISO broadcast channel, where a base station (BS) transmits independent data streams to $K$ users. The BS is equipped with $N$ transmit antennas while each user has a single receive antenna. A complex channel vector is assumed between the BS's transmit antennas and the $k$-th user, which is denoted by $\h_k\in\C^{1\times N}$. It is further assumed that perfect channel knowledge is available to the BS.

At a given symbol time, $K$ independent symbols are to be sent to $K$ users (throughout the paper, we drop the symbol's time index to simplify the notation). We collect these symbols in users' symbol vector $\s=[s_1,\ldots,s_K]^T\in\C^{K\times1}$ with $s_k$ denoting the symbol intended for the $k$-th user. Each symbol $s_k$ is drawn from a finite equiprobable two-dimensional constellation set. Without loss of generality, we assume an $M$-ary constellation set $\chi=\{x_i|x_i\in\mathbb{C}\}_{i=1}^M$ with unit average power for all $K$ users. 
The user's symbol vector $\s$ is mapped onto $N$ transmit antennas. This is done by a symbol-level precoder yielding the BS's transmit vector $\uuu\in\C^{N\times1}$. The received signal at the $k$-th user's receiver is then
$r_k = \h_k\uuu+w_k$,
where $w_k\sim\mathcal{CN}(0,\sigma_k^2)$ is the complex additive white Gaussian noise at the $k$-th receiver. Again without loss of generality, we assume identical noise distributions across the receivers, i.e., $\sigma_k=\sigma, k=1,...,K$. From the received scalar $r_k$, the user $k$ may apply the maximum-likelihood (ML) decision rule to detect its own symbol $s_k$.

\section{Distance Preserving Constructive Interference Regions}\label{sec:cir}

In this section, we provide an overview of DPCIRs and their properties which will be useful in formulating the SLP design problem. Hereafter, we denote each complex-valued constellation point by its equivalent real-valued vector notation, hence the set of points in $\chi$ is denoted by $\{\x_i|\x_i\in\mathbb{R}^2\}_{i=1}^M$.

The DPCIRs can be described based on the hyperplane representation of ML decision regions \cite{slp_gen}. For the equiprobable constellation set $\chi$, the ML decision rule corresponds to the Voronoi regions of $\chi$ which are bounded by hyperplanes. For a given constellation point $\x_i$ and one of its neighboring points $\x_j$, the hyperplane separating the Voronoi regions of $\x_i$ and $\x_j$ is given by $\{\x\mid \x\in\mathbb{R}^2, \aaa_{i,j}^T \x=b_{i,j}\}$, where $\aaa_{i,j}=\x_i-\x_j$ (or any non-zero scalar multiplication of $\x_i-\x_j$), and $b_{i,j}=\aaa_{i,j}^T(\x_i+\x_j)/2$. This hyperplane indicates a decision boundary (Voronoi edge) between $\x_i$ and $\x_j$, which splits $\mathbb{R}^2$ plane into two halfspaces. The closed halfspace $\mathcal{H}_{i,j}=\{\x\mid \x\in\mathbb{R}^2, \aaa_{i,j}^T \x\geq b_{i,j}\}$ contains the decision region of $\x_i$, where $\aaa_{i,j}$ is the inward normal and $b_{i,j}$ determines the offset from the origin. The Voronoi region of $\x_i$ is then given by intersecting all such halfspaces, i.e.,
\setlength{\abovedisplayskip}{5pt}
\begin{equation}\label{eq:cap}
\begin{aligned}
\mathcal{D}_{i,\text{ML}}&=\bigcap_{\x_j\in\mathcal{S}_i}\mathcal{H}_{i,j}\\
&=\left\{\x\mid \x\in\mathbb{R}^2, \aaa_{i,j}^T \x\geq b_{i,j}, \forall \x_j\in\mathcal{S}_i\right\},
\end{aligned}
\end{equation}
where $\mathcal{S}_i$ denotes the set of neighboring points of $\x_i$ with $|\mathcal{S}_i|=M_i$ . Each Voronoi region can be either an unbounded or bounded polyhedron, depending on the relative location of $\x_i$ in $\chi$. It can be easily verified that all types of polyhedra (i.e., bounded or unbounded), and hence the Voronoi regions, are convex sets \cite{convex_boyd}. The Voronoi region \eqref{eq:cap} can be expressed in a more compact form as
\begin{equation}\label{eq:capc}
\mathcal{D}_{i,\mathrm{ML}}=\left\{\x\mid \x\in\mathbb{R}^2, \A_i \x\succeq \bbb_i\right\},
\end{equation}
where $\A_i\in\mathbb{R}^{M_i\times2}$ and $\bbb_i\in\mathbb{R}^{M_i}$ contain $\aaa_{i,j}^T$ and $b_{i,j}$, respectively, for all $\x_j\in\mathcal{S}_i$.

For any hyperplane $\{\x\mid \x\in\mathbb{R}^2, \aaa_{i,j}^T \x=b_{i,j}\}$, the set of points $\left\{\x\mid \x\in\mathbb{R}^2, \aaa_{i,j}^T \x=b_{i,j}+c_{i,j}, c_{i,j}\in\mathbb{R}_+\right\}$ represents a parallel hyperplane with the orthogonal distance $c_{i,j}/\|\aaa_{i,j}\|$ in the direction of $\aaa_{i,j}$. Let $d_{i,j}$ denote the distance between $\x_i$ and $\x_j$. Since the DPCIRs are defined so as to not decrease the original distances between the constellation points, the distance preserving margin is equal to $d_{i,j}/2$. Therefore, the DPCIR associated with $\x_i$ can be described as
\begin{equation}\label{eq:dpcirlinineq}
\mathcal{D}_{i,\mathrm{DP}}=\left\{\x\mid\x\in\mathbb{R}^2, \A_i \x\succeq \bbb_i+\ccc_{i,\text{DP}}\right\},
\end{equation}
where $\ccc_{i,\text{DP}}\in\mathbb{R}^{M_i}_+$ is the vector containing $d_{i,j}\|\aaa_{i,j}\|/2$ for all $\x_j\!\in\!\mathcal{S}_i$. Similar to $\mathcal{D}_{i,\mathrm{ML}}$, $\mathcal{D}_{i,\mathrm{DP}}$ is the intersection of a number of closed halfspaces and thus is a polyhedron. Furthermore, the bounding hyperplanes of $\mathcal{D}_{i,\mathrm{DP}}$ are parallel to their corresponding Voronoi edges.
It is straightforward to show that the following properties hold for DPCIRs:
\begin{property}\label{pro:0}
For any $\x_i\in\chi$ and $\x\in\mathcal{D}_{i,\mathrm{DP}}$, we have
\begin{itemize}
\item[i.]$\mathcal{D}_{i,\mathrm{DP}}\subseteq\mathcal{D}_{i,\mathrm{ML}}$.
\item[ii.]$\|\x-\yyy\|\geq\|\x_i-\x_j\|=d_{i,j}, \forall \x_j\in\chi, \yyy\in\mathcal{D}_{j,\mathrm{DP}}$.
\item[iii.]$\|\x-\x_j\|\geq\|\x_i-\x_j\|, \forall \x_j\in\chi$, where equality holds only when $\x=\x_i$.
\end{itemize}
\end{property}
From the constellation set $\chi$, one can easily derive its convex hull $\conv\chi$, i.e., the smallest convex set containing $\chi$ (see Fig. \ref{fig:1}). The set of points belonging to the boundary of $\conv\chi$ is denoted by $\bd\chi$, and the set of interior points of $\conv\chi$, i.e., $\conv\chi\backslash\bd\chi$, is shown by $\interior\chi$. It follows from \eqref{eq:dpcirlinineq} that if $\mathcal{D}_{i,\mathrm{ML}}$ is bounded, then $\mathcal{D}_{i,\mathrm{DP}}=\x_i$. On the other hand, for an unbounded $\mathcal{D}_{i,\mathrm{ML}}$, the associated $\mathcal{D}_{i,\mathrm{DP}}$ is an unbounded polyhedron (more specifically, a polyhedral angle as depicted in Fig. \ref{fig:1}) which is uniquely characterized using the two following lemmas.
\begin{figure}
\centering
\includegraphics[width=.45\columnwidth]{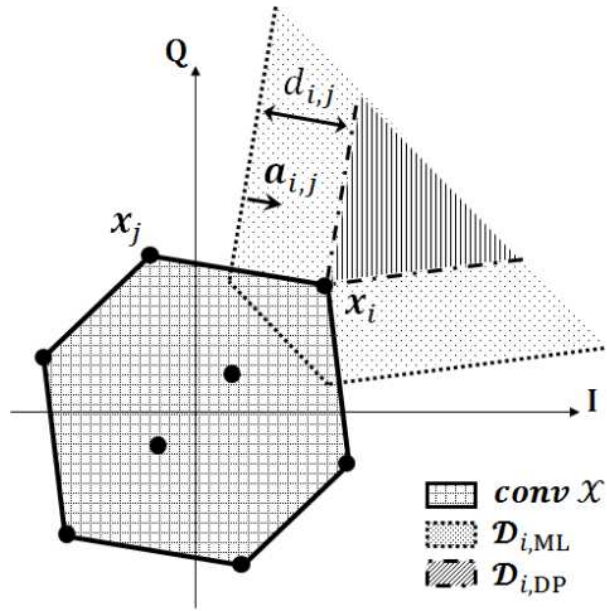}
\caption{The AWGN-optimized 8-ary constellation and one of its boundary points $\x_i$ with unbounded Voronoi region. The associated DPCIR is a polyhedral angle with two infinite edges starting from $\x_i$.}
\label{fig:1}
\end{figure}
\begin{lemma}\label{lem:1}
A point $\x_i\in\chi$ lies on the boundary of (or is a vertex of) $\conv\chi$ iff its Voronoi region $\mathcal{D}_{i,\mathrm{ML}}$ is unbounded \cite[Lemma 2.2]{vor_diag}.
\end{lemma}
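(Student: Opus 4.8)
The plan is to reduce both directions of the claimed equivalence to the existence of a single nonzero vector $\vvv\in\mathbb{R}^2$ that simultaneously serves as a recession direction of the Voronoi polyhedron $\mathcal{D}_{i,\mathrm{ML}}$ and as the normal of a supporting hyperplane of $\conv\chi$ at $\x_i$.

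\emph{Unboundedness side.} I would start from the standard polyhedral fact \cite{convex_boyd} that a nonempty polyhedron $\{\x\mid\A_i\x\succeq\bbb_i\}$ is unbounded if and only if its recession cone $\{\vvv\mid\A_i\vvv\succeq\mathbf{0}\}$ is not $\{\mathbf{0}\}$. Since $\x_i\in\mathcal{D}_{i,\mathrm{ML}}$, the Voronoi region is always nonempty, so $\mathcal{D}_{i,\mathrm{ML}}$ is unbounded iff there exists $\vvv\neq\mathbf{0}$ with $\aaa_{i,j}^T\vvv\geq0$ for all $\x_j\in\mathcal{S}_i$ (equivalently, $\x_i+t\vvv\in\mathcal{D}_{i,\mathrm{ML}}$ for all $t\geq0$, using $\aaa_{i,j}^T\x_i\geq b_{i,j}$). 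Because the halfspaces coming from non-neighboring points are redundant — they leave the set $\mathcal{D}_{i,\mathrm{ML}}=\{\x\mid\|\x-\x_i\|\leq\|\x-\x_j\|,\ \forall\x_j\in\chi\}$, and hence its recession cone, unchanged — this is equivalent to the existence of $\vvv\neq\mathbf{0}$ with $(\x_i-\x_j)^T\vvv\geq0$ for \emph{every} $\x_j\in\chi$.

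\emph{Boundary side.} Next I would use that $\x_i$ lies on $\bd\chi$ exactly when $\conv\chi$ admits a supporting hyperplane through $\x_i$, i.e., when there is $\vvv\neq\mathbf{0}$ with $\vvv^T\x_i\geq\vvv^T\x$ for all $\x\in\conv\chi$; and since a linear functional over a polytope attains its maximum at an extreme point and every extreme point of $\conv\chi$ belongs to $\chi$, this is in turn equivalent to $\vvv^T\x_i\geq\vvv^T\x_j$ for all $\x_j\in\chi$, that is, $(\x_i-\x_j)^T\vvv\geq0$ for all $\x_j\in\chi$. This is verbatim the condition obtained in the previous step, which closes the equivalence: $\x_i\in\bd\chi\iff\mathcal{D}_{i,\mathrm{ML}}$ is unbounded. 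If $\chi$ is degenerate, i.e., all its points are collinear, then $\conv\chi$ has empty interior, so $\bd\chi=\conv\chi\supseteq\chi$ while every $\mathcal{D}_{i,\mathrm{ML}}$ is an unbounded strip; the equivalence still holds.

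I expect the only delicate points to be (i) the bookkeeping that passing between the reduced description over $\mathcal{S}_i$ and the full description over $\chi$ changes neither $\mathcal{D}_{i,\mathrm{ML}}$ nor its recession cone, and (ii) invoking the supporting-hyperplane characterization of boundary points correctly in the full-dimensional case, with the collinear case treated separately as above. Everything else is a direct application of standard convex/polyhedral geometry, which is why the statement can simply be quoted from \cite[Lemma 2.2]{vor_diag}.
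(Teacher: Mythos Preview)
The paper does not give its own proof of this lemma at all: it is stated with a bare citation to \cite[Lemma~2.2]{vor_diag} and then used as a black box. Your proposal therefore goes strictly beyond what the paper supplies.

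As a self-contained argument your route is correct. Reducing unboundedness of the Voronoi polyhedron to a nonzero recession direction $\vvv$ with $(\x_i-\x_j)^T\vvv\geq0$ for all $\x_j\in\chi$, and reducing $\x_i\in\bd\chi$ to the existence of a supporting hyperplane with the identical inequality, is exactly the standard way this fact is established in computational geometry. Your bookkeeping remarks are on point: passing between the neighbor set $\mathcal{S}_i$ and the full constellation $\chi$ does not change $\mathcal{D}_{i,\mathrm{ML}}$ (the extra halfspaces are redundant), hence does not change its recession cone; and in the full-dimensional case the supporting-hyperplane characterization of boundary points is the right tool, with the collinear degenerate case handled separately as you do. Nothing is missing.
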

\begin{lemma}\label{lem:2}
For every $\x_i\in\chi$ with unbounded $\mathcal{D}_{i,\mathrm{ML}}$, $\mathcal{D}_{i,\mathrm{DP}}$ is a polyhedral angle with a vertex at $\x_i$, and each of its edges is perpendicular to one of the two line segments connecting $\x_i$ to its two neighboring points on $\bd\chi$.
\end{lemma}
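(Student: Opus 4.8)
The plan is to reduce the region $\mathcal{D}_{i,\mathrm{DP}}$ of \eqref{eq:dpcirlinineq} to a shifted polyhedral cone, then to identify that cone with the normal cone of $\conv\chi$ at $\x_i$, after which the statement about the two edges follows from the planar geometry of a polygon's normal cone. First I would take $\aaa_{i,j}=\x_i-\x_j$, so that $\|\aaa_{i,j}\|=d_{i,j}$, $b_{i,j}=\aaa_{i,j}^T(\x_i+\x_j)/2$, and the $j$-th entry of $\ccc_{i,\mathrm{DP}}$ equals $\|\aaa_{i,j}\|^2/2=\aaa_{i,j}^T(\x_i-\x_j)/2$. Adding these, the $j$-th right-hand side in \eqref{eq:dpcirlinineq} collapses to $b_{i,j}+d_{i,j}\|\aaa_{i,j}\|/2=\aaa_{i,j}^T\x_i$, so that
\begin{equation*}
\mathcal{D}_{i,\mathrm{DP}}=\bigl\{\x\in\mathbb{R}^2\mid \aaa_{i,j}^T(\x-\x_i)\geq0,\ \forall\x_j\in\mathcal{S}_i\bigr\}=\x_i+\mathcal{C}_i,\qquad \mathcal{C}_i:=\bigl\{\vvv\in\mathbb{R}^2\mid \aaa_{i,j}^T\vvv\geq0,\ \forall\x_j\in\mathcal{S}_i\bigr\}.
\end{equation*}
This already exhibits $\mathcal{D}_{i,\mathrm{DP}}$ as a translate by $\x_i$ of the polyhedral cone $\mathcal{C}_i$; and since $\mathcal{D}_{i,\mathrm{ML}}$ in \eqref{eq:cap} uses the same normals, $\mathcal{C}_i$ is exactly its recession cone, so $\mathcal{D}_{i,\mathrm{DP}}$ is unbounded precisely when $\mathcal{D}_{i,\mathrm{ML}}$ is.

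Second, I would show $\mathcal{C}_i=N_{\conv\chi}(\x_i):=\{\vvv\mid\vvv^T\x_i\geq\vvv^T\x_j,\ \forall\x_j\in\chi\}$, the cone of linear functionals maximized over $\chi$ at $\x_i$. The inclusion $N_{\conv\chi}(\x_i)\subseteq\mathcal{C}_i$ is immediate since $\mathcal{S}_i\subseteq\chi$. Conversely, if $\vvv\in\mathcal{C}_i$ then $\x_i+\epsilon\vvv$ satisfies $\aaa_{i,j}^T(\x_i+\epsilon\vvv)\geq\aaa_{i,j}^T\x_i\geq b_{i,j}$ for all $\x_j\in\mathcal{S}_i$ and every $\epsilon>0$, hence $\x_i+\epsilon\vvv\in\mathcal{D}_{i,\mathrm{ML}}$; the Voronoi property then gives $\|\epsilon\vvv\|^2\leq\|\x_i+\epsilon\vvv-\x_j\|^2$ for all $\x_j\in\chi$, and cancelling the quadratic term, dividing by $\epsilon$, and letting $\epsilon\to\infty$ forces $\vvv^T(\x_i-\x_j)\geq0$ for every $\x_j\in\chi$. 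Thus $\mathcal{C}_i$ equals the normal cone and, as a by-product, only the inequalities associated with the extreme neighbors are binding.

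Third, by Lemma~\ref{lem:1} the point $\x_i$ is a vertex of the convex polygon $\conv\chi$; let $\x_{j_1},\x_{j_2}$ be the two points of $\chi$ adjacent to $\x_i$ along $\bd\chi$, so that $[\x_i,\x_{j_1}]$ and $[\x_i,\x_{j_2}]$ are the two edges of $\conv\chi$ incident to $\x_i$. These two points are Voronoi neighbors of $\x_i$ (adjacent hull vertices are Delaunay-adjacent, since the outward part of the perpendicular bisector of such an edge belongs to both Voronoi cells). The normal cone of a convex polygon at a vertex is the cone generated by the outward normals of the two incident edges, and it is a pointed two-dimensional cone, that is, a polyhedral angle; its two bounding rays run along those outward normals, hence each edge of $\mathcal{D}_{i,\mathrm{DP}}=\x_i+\mathcal{C}_i$ is perpendicular to one of the segments $[\x_i,\x_{j_1}]$ and $[\x_i,\x_{j_2}]$, which is the assertion.

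I expect the algebraic collapse in the first paragraph to be routine, and the last paragraph to be the real obstacle: one has to argue carefully that exactly two of the $M_i$ defining inequalities of $\mathcal{C}_i$ remain active and that their normals are not antipodal, so that $\mathcal{C}_i$ is a genuine angle rather than a half-plane or a full line. The borderline case in which $\x_i$ lies in the relative interior of an edge of $\conv\chi$ (e.g.\ an interior boundary point of a QAM constellation), where the angle degenerates to a single ray, should be treated separately or excluded by the standing assumption that all boundary constellation points are vertices of $\conv\chi$.
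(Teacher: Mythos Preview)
Your argument is correct and in fact somewhat more informative than the paper's. The paper proceeds more directly: it observes that $\mathcal{D}_{i,\mathrm{DP}}$ and $\mathcal{D}_{i,\mathrm{ML}}$ share the same set of outward normals $-\aaa_{i,j}$, invokes a cited criterion (\cite[p.~20, Theorem~4]{convex_poly}) that a finite intersection of closed halfspaces is unbounded iff all outward normals lie in a single closed halfspace, and thereby transfers unboundedness from $\mathcal{D}_{i,\mathrm{ML}}$ to $\mathcal{D}_{i,\mathrm{DP}}$; it then checks by hand that $\x_i$ is the unique solution of $\A_i\x=\bbb_i+\ccc_{i,\mathrm{DP}}$, so the region has a single vertex and two infinite edges, whose directions are read off from the boundary neighbors. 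Your route instead performs the algebraic collapse $b_{i,j}+c_{i,j}=\aaa_{i,j}^T\x_i$ up front, exhibiting $\mathcal{D}_{i,\mathrm{DP}}=\x_i+\mathcal{C}_i$ as an explicit translate of a polyhedral cone, and then identifies $\mathcal{C}_i$ with the normal cone $N_{\conv\chi}(\x_i)$ via the Voronoi property. This buys you a clean structural description: the two extreme rays and their perpendicularity to the incident hull edges then drop out of standard polygon geometry rather than from ad hoc inspection. The paper's approach is shorter and leans on a quotable theorem; yours is more self-contained and yields the normal-cone identification as a by-product, which is independently useful (and in spirit underlies the later Lemma~\ref{lem:3} and Theorem~\ref{thm:1}). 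Your closing caveat about boundary points that are not vertices of $\conv\chi$ is well taken; the paper's proof does not address this degeneracy explicitly either.
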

\begin{proof}
See Appendix \ref{app:lem2}.
\end{proof}
Lemma \ref{lem:2} implicitly states that neither changing the location of any constellation point $\x_j\in\interior\chi$ nor adding a new constellation point on $\bd\chi$ does not affect $\mathcal{D}_{i,\mathrm{DP}}$ for any $\x_i\in\bd\chi$, as they both keep the direction of $\aaa_{i,j}$ unchanged for all $\x_j\in\mathcal{S}_i\cap\bd\chi$. This leads to the following two lemmas.
\begin{lemma}\label{lem:3}
For any constellation point $\x_i\in\chi$, we have $\|\x\|\geq\|\x_i\|,\forall\x\in\mathcal{D}_{i,\mathrm{DP}}$ iff $\conv\chi$ contains the origin. Equality is achieved only when $\x=\x_i$.
\end{lemma}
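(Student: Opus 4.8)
The plan is to split into the two implications and handle the easier direction (sufficiency) constructively, then tackle necessity by contraposition. For sufficiency, assume the origin lies in $\conv\chi$. Fix $\x_i\in\chi$; if $\mathcal{D}_{i,\mathrm{ML}}$ is bounded then $\mathcal{D}_{i,\mathrm{DP}}=\{\x_i\}$ by \eqref{eq:dpcirlinineq} and the claim is trivial, so assume $\x_i\in\bd\chi$ with unbounded Voronoi region. By Lemma~\ref{lem:2}, $\mathcal{D}_{i,\mathrm{DP}}$ is a polyhedral angle with vertex $\x_i$ whose two edges are perpendicular to the segments joining $\x_i$ to its two $\bd\chi$-neighbors, say $\x_p$ and $\x_q$. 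I would parametrize $\x\in\mathcal{D}_{i,\mathrm{DP}}$ as $\x=\x_i+\vvv$ where $\vvv$ lies in the cone generated by the two edge directions, and compute $\|\x\|^2-\|\x_i\|^2=2\x_i^T\vvv+\|\vvv\|^2$. It therefore suffices to show $\x_i^T\vvv\ge 0$ for every $\vvv$ in that cone, i.e. that $\x_i$ lies in the dual cone of the edge cone; equivalently, $\x_i^T\uuu\ge 0$ for each of the two edge generators $\uuu$. This is where the hypothesis enters: since the edge directions are the inward normals of the DPCIR-defining halfspaces $\aaa_{i,p}$, $\aaa_{i,q}$ rotated appropriately—more precisely the edge of the polyhedral angle opposite the facet with normal $\aaa_{i,j}=\x_i-\x_j$ points away from $\x_j$—I would show that each edge generator makes a nonnegative inner product with $\x_i$ exactly because $\mathbf{0}\in\conv\chi$ forces $\x_i$ to sit on the far side of each supporting line through $\x_i$ relative to the interior of $\conv\chi$. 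Concretely, $\mathbf 0\in\conv\chi$ means $\mathbf 0$ is a convex combination of constellation points, all of which lie in the halfspace $\{\x:\aaa_{i,j}^T\x\le\aaa_{i,j}^T\x_i\}$ on the facet side; combined with the geometry of Lemma~\ref{lem:2} this yields $\x_i^T\uuu\ge0$ for both generators, hence $\x_i^T\vvv\ge0$, hence $\|\x\|\ge\|\x_i\|$, with equality iff $\vvv=\mathbf 0$.

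For necessity I would argue the contrapositive: if $\mathbf 0\notin\conv\chi$, exhibit a constellation point $\x_i$ and a point $\x\in\mathcal{D}_{i,\mathrm{DP}}$ with $\|\x\|<\|\x_i\|$. Since $\conv\chi$ is a compact convex set not containing the origin, let $\x_i$ be the point of $\conv\chi$ nearest the origin; this $\x_i$ is a boundary point, in fact I will argue it can be taken to be a vertex (a constellation point) by a standard separating-hyperplane/supporting-line argument, so $\mathcal{D}_{i,\mathrm{ML}}$ is unbounded and $\mathcal{D}_{i,\mathrm{DP}}$ is a nondegenerate polyhedral angle at $\x_i$. The separating hyperplane at $\x_i$ has outward normal $\x_i$ itself, so $\conv\chi$ lies in $\{\x:\x_i^T\x\ge\|\x_i\|^2\}$; but then at least one edge direction $\uuu$ of the polyhedral angle satisfies $\x_i^T\uuu<0$ (the angle opens partly toward the origin because its edges are only perpendicular to segments into the interior side, not aligned with $\x_i$), and moving an infinitesimal amount along that edge strictly decreases the norm, contradicting $\|\x\|\ge\|\x_i\|$. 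The equality clause follows from the sufficiency computation: when $\mathbf 0\in\conv\chi$, $\|\x\|^2-\|\x_i\|^2=2\x_i^T\vvv+\|\vvv\|^2\ge\|\vvv\|^2$, which is zero only for $\vvv=\mathbf 0$, i.e. $\x=\x_i$.

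The main obstacle I anticipate is the bookkeeping in the sufficiency direction: translating ``$\mathbf 0\in\conv\chi$'' into the sign condition $\x_i^T\uuu\ge0$ on the edge generators requires carefully relating the edge directions of the polyhedral angle (characterized in Lemma~\ref{lem:2} via perpendicularity to the segments $\x_i\x_p$ and $\x_i\x_q$) to the inward facet normals $\aaa_{i,p},\aaa_{i,q}$, and then using that every constellation point—hence their convex combination $\mathbf 0$—lies on the correct side of each facet hyperplane through $\x_i$. A clean way to package this is to note that the edge cone of $\mathcal{D}_{i,\mathrm{DP}}$ is the normal/recession cone at $\x_i$ of $\conv\chi$'s polar-type object, so its dual contains $\conv\chi-\x_i$ and in particular $-\x_i$ when $\mathbf 0\in\conv\chi$; I would make this precise rather than manipulating the two generators by hand, since the two-generator computation, while elementary, is where sign errors creep in. The necessity direction, by contrast, is mostly a routine application of the supporting-hyperplane theorem once the nearest-point $\x_i$ is identified, so I expect it to go through quickly.
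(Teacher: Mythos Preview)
Your approach is sound in spirit but differs substantially from the paper's, and your necessity argument has a genuine gap.

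\textbf{How the paper argues.} For sufficiency the paper does something much shorter than your cone computation: it \emph{adds the origin as an extra constellation point}. By the remark after Lemma~\ref{lem:2}, this does not change $\mathcal{D}_{i,\mathrm{DP}}$ for any $\x_i\in\bd\chi$, and now one may simply invoke Property~\ref{pro:0}\,(iii) with $\x_j=\mathbf 0$ to get $\|\x\|=\|\x-\mathbf 0\|\geq\|\x_i-\mathbf 0\|=\|\x_i\|$. For necessity the paper again augments $\chi$ by $\mathbf 0$; since $\mathbf 0\notin\conv\chi$, the origin becomes a \emph{boundary} point of $\tilde\chi$ and hence a Voronoi neighbor of some $\x_l\in\bd\chi$. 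The distance-preserving halfspace from $\mathbf 0$ to $\x_l$ is $\{\x:\x_l^T\x\geq\x_l^T\x_l\}$, and the fact that it genuinely cuts $\mathcal{D}_{l,\mathrm{DP}}$ (rather than containing it) means the hyperplane $\{\x:\x_l^T\x=\x_l^T\x_l\}$ is not supporting at $\x_l$; Property~\ref{pro:1} then gives a point in $\mathcal{D}_{l,\mathrm{DP}}$ strictly closer to the origin than $\x_l$. So both directions in the paper hinge on the ``add the origin'' trick, which lets existing properties do the work.

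\textbf{Your sufficiency.} Your direct route via $\|\x\|^2-\|\x_i\|^2=2\x_i^T\vvv+\|\vvv\|^2$ is correct, and your final suggestion is exactly the right packaging: the recession cone of $\mathcal{D}_{i,\mathrm{DP}}$ coincides with the (outward) normal cone of $\conv\chi$ at $\x_i$, namely $\{\vvv:(\x_i-\x_k)^T\vvv\geq 0\ \forall\,\x_k\in\chi\}$; taking $\x_k$ to be the origin (a member of $\conv\chi$ by hypothesis) gives $\x_i^T\vvv\geq 0$ immediately. Your intermediate attempt, asserting that all constellation points lie in $\{\x:\aaa_{i,j}^T\x\le\aaa_{i,j}^T\x_i\}$, is not correct as stated (that halfspace is a DPCIR facet, not a supporting halfspace of $\conv\chi$), so you should go straight to the normal-cone identification rather than manipulate the two generators by hand.

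\textbf{The gap in your necessity.} You let $\x_i$ be the point of $\conv\chi$ nearest the origin and then claim it ``can be taken to be a vertex.'' In general it cannot: for a polygon the nearest point may lie in the relative interior of an edge (e.g.\ $\chi=\{(1,1),(1,-1)\}$, nearest point $(1,0)$). Your subsequent sentence, that the separating hyperplane at $\x_i$ has outward normal $\x_i$ itself, is only valid for the true nearest point, not for an adjacent vertex. The fix is easy but should be stated: if the nearest point $\pmb p$ lies on the edge $[\x_p,\x_q]$, then the outward edge-normal direction $-\pmb p$ belongs to the normal cone (hence the recession cone of $\mathcal{D}_{p,\mathrm{DP}}$) at the vertex $\x_p$, and $\x_p^T(-\pmb p)=-\x_p^T\pmb p\leq-\|\pmb p\|^2<0$, so moving from $\x_p$ along $-\pmb p$ stays in $\mathcal{D}_{p,\mathrm{DP}}$ and strictly decreases the norm. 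With this correction your contrapositive goes through; as written, the vertex claim is false.
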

\begin{proof}
See Appendix \ref{app:lem3}.
\end{proof}
\begin{lemma}\label{lem:4}
If $\mathbf{0}\notin\conv\chi$, there exists at least one constellation point $\x_l\in\chi$ for which for any $\x\in\mathcal{D}_{l,\mathrm{DP}}$, $\mathbf{0}\notin\conv\tilde{\chi}_{\x_l,\x}$, where $\tilde{\chi}_{\x_l,\x}=\chi\cup\{\x\}$.
\end{lemma}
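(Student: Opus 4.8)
The plan is to argue directly from the geometry: if the origin lies outside $\conv\chi$, then by the separating hyperplane theorem there is a direction $\vvv$ along which all constellation points have strictly positive inner product while the origin does not. I would pick $\x_l\in\chi$ to be a constellation point that is extremal in the direction $-\vvv$, i.e. one minimizing $\vvv^T\x$ over $\x\in\chi$; such a point lies on $\bd\chi$, so by Lemma \ref{lem:1} its Voronoi region $\mathcal{D}_{l,\mathrm{ML}}$ is unbounded and by Lemma \ref{lem:2} its DPCIR $\mathcal{D}_{l,\mathrm{DP}}$ is a polyhedral angle with vertex $\x_l$. The key observation is that the two edge directions of this polyhedral angle are perpendicular to the segments joining $\x_l$ to its two neighbours on $\bd\chi$, and because $\x_l$ is the extreme point in direction $-\vvv$, both of those neighbours lie in the closed halfspace $\{\x : \vvv^T\x \ge \vvv^T\x_l\}$; hence both edge directions $\uuu$ of $\mathcal{D}_{l,\mathrm{DP}}$ satisfy $\vvv^T\uuu \ge 0$, and therefore every $\x\in\mathcal{D}_{l,\mathrm{DP}}$ satisfies $\vvv^T\x \ge \vvv^T\x_l$.

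With that in hand, the rest is a short computation. Fix any $\x\in\mathcal{D}_{l,\mathrm{DP}}$ and set $\tilde{\chi}_{\x_l,\x}=\chi\cup\{\x\}$. Since $\vvv^T\x_i \ge \min_j \vvv^T\x_j = \vvv^T\x_l$ for every $\x_i\in\chi$ by the choice of $\x_l$, and $\vvv^T\x\ge\vvv^T\x_l$ by the halfspace argument above, the hyperplane $\{\zzz : \vvv^T\zzz = \vvv^T\x_l\}$ (or one shifted infinitesimally toward the origin) still separates the origin from all of $\tilde{\chi}_{\x_l,\x}$; more precisely, $\vvv^T\mathbf{0}=0 < \vvv^T\x_l \le \vvv^T\zzz$ for every $\zzz\in\tilde{\chi}_{\x_l,\x}$, where the strict inequality $0<\vvv^T\x_l$ is exactly the statement that the separating hyperplane for $\mathbf{0}\notin\conv\chi$ can be taken with $\vvv^T\x_l>0$ (the extreme value is still strictly positive). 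Since a single hyperplane strictly separates $\mathbf{0}$ from every point of the finite set $\tilde{\chi}_{\x_l,\x}$, it separates $\mathbf{0}$ from $\conv\tilde{\chi}_{\x_l,\x}$, so $\mathbf{0}\notin\conv\tilde{\chi}_{\x_l,\x}$.

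The main obstacle I anticipate is making the edge-direction argument fully rigorous, in particular handling the degenerate cases: when $\x_l$ is not a vertex of $\conv\chi$ but lies in the relative interior of an edge (so one of its "neighbours on $\bd\chi$" is collinear with it), and when the minimizer of $\vvv^T\x$ over $\chi$ is not unique (an entire edge of $\conv\chi$ is orthogonal to $\vvv$). In the first case one should check that Lemma \ref{lem:2} still gives edge directions with nonnegative $\vvv$-component; in the second case one can perturb $\vvv$ slightly so that the minimizer becomes unique while the strict separation $\vvv^T\mathbf{0}<\min_i\vvv^T\x_i$ is preserved, reducing to the generic case. Once these corner cases are dispatched, the separating-hyperplane bookkeeping in the last paragraph is routine.
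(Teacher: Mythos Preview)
Your separating-hyperplane idea is natural, but the geometric step at its heart is inverted. The DPCIR of a boundary point $\x_l$ is $\x_l$ plus the \emph{normal cone} of $\conv\chi$ at $\x_l$: by Lemma~\ref{lem:2} (and the description \eqref{eq:dpcirlineq}) one has $\mathcal{D}_{l,\mathrm{DP}}=\{\x:(\x_l-\x_j)^T(\x-\x_l)\ge 0,\ \forall\,\x_j\in\mathcal{S}_l\}$, so the cone of admissible directions is exactly the set of outward normals to $\conv\chi$ at $\x_l$. Now if $\x_l$ is the \emph{minimizer} of $\vvv^T\x$ over $\chi$, then $-\vvv$ lies in that normal cone, so $\x_l-t\vvv\in\mathcal{D}_{l,\mathrm{DP}}$ for all $t\ge 0$ and $\vvv^T(\x_l-t\vvv)\to -\infty$. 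Thus your claim that every $\x\in\mathcal{D}_{l,\mathrm{DP}}$ satisfies $\vvv^T\x\ge\vvv^T\x_l$ is false in general. Concretely, take $\chi=\{(1,0),(2,1),(2,-1)\}$ and $\vvv=(1,0)$; the minimizer is $\x_l=(1,0)$, its DPCIR is $\{x+y\le 1,\ x-y\le 1\}$, and the origin itself lies in $\mathcal{D}_{l,\mathrm{DP}}$, so $\mathbf{0}\in\conv\tilde{\chi}_{\x_l,\mathbf{0}}$ trivially.

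The specific inference that fails is ``both neighbours lie in $\{\x:\vvv^T\x\ge\vvv^T\x_l\}$, hence both edge directions $\uuu$ satisfy $\vvv^T\uuu\ge 0$.'' Perpendicularity of $\uuu$ to $\x_l-\x_j$ gives $\uuu^T(\x_l-\x_j)=0$, which says nothing about the sign of $\vvv^T\uuu$; in the example above the two edge directions are $(-1,1)$ and $(-1,-1)$, both with $\vvv^T\uuu=-1<0$. Switching to the \emph{maximizer} of $\vvv^T\x$ does not rescue the argument either, since the normal cone there can have angular width exceeding $\pi/2$ and hence contain directions $\uuu$ with $\vvv^T\uuu<0$. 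The paper's own proof sidesteps this by an indirect argument: it sets $\mathcal{C}_i=\bigcup_{\x\in\mathcal{D}_{i,\mathrm{DP}}}\conv\tilde{\chi}_{\x_i,\x}$, observes $\conv\chi=\bigcap_i\mathcal{C}_i$, and concludes by contradiction that some $\mathcal{C}_l$ must miss the origin. If you want to salvage a direct separating-hyperplane argument you will need a different selection rule for $\x_l$ and a separator that is allowed to depend on $\x$, not a single global $\vvv$.
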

\begin{proof}
If $\mathbf{0}\notin\conv\chi$, for any $\x_i\in\chi$ and any $\x\in\mathcal{D}_{i,\mathrm{DP}}$ with $\tilde{\chi}_{\x_i,\x}=\chi\cup\{\x\}$, let define $\mathcal{C}_i=\underset{\x\in\mathcal{D}_{i,\mathrm{DP}}}\bigcup\conv\tilde{\chi}_{\x_i,\x}$. Having $\conv\chi\subseteq\conv\tilde{\chi}_{\x_i,\x}$, it follows from the definition of convex hull that $\conv\chi=\underset{\x_i\in\chi}\bigcap\mathcal{C}_i$. If $\mathbf{0}\in\mathcal{C}_i, \forall\x_i\in\chi$, then $\mathbf{0}\in\conv\chi$ which contradicts our assumption. Hence there must exist at least one constellation point, say $\x_l$, for which $\mathcal{C}_l$ and therefore none of $\conv\tilde{\chi}_{\x_l,\x}, \forall\x\in\mathcal{D}_{l,\mathrm{DP}}$ contains the origin, as required.
\end{proof}

To proceed, it is more convenient to express the linear inequalities of \eqref{eq:dpcirlinineq} by an equivalent set of linear equations as
\begin{equation}\label{eq:dpcirlineq}
\mathcal{D}_{i,\text{DP}}=\Big\{\x\!\mid\!\x\!\in\!\mathbb{R}^2\!, \A_i \x=\bbb_i+\ccc_{i,\text{DP}}+\Deee_i, \Deee_i\!\in\!\mathbb{R}^{M_i}_+ \Big\}.
\end{equation}
The linear equations in \eqref{eq:dpcirlineq} indicate that any $\x\in\mathcal{D}_{i,\mathrm{DP}}$ can be specified as the intersection point of $M_i$ hyperplanes, each of which is parallel to a boundary hyperplane of $\mathcal{D}_{i,\mathrm{DP}}$ but has a different offset due to the term $\Deee_i$. Finally, we state the following theorem which is the main result of this section.
\begin{theorem}\label{thm:1}
For any constellation point $\x_i\in\chi$ with $\mathcal{D}_{i,\mathrm{DP}}$ as expressed in \eqref{eq:dpcirlineq}, function $f(\x)=\|\x\|$ over its domain $\mathcal{D}_{i,\mathrm{DP}}$ is a strictly monotonic increasing function of the elements of $\Deee_i$ iff $\conv\chi$ contains the origin.
\end{theorem}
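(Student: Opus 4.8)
The plan is to prove the two implications separately, building on the polyhedral-angle description of $\mathcal{D}_{i,\mathrm{DP}}$ from Lemma~\ref{lem:2} and on Lemma~\ref{lem:3} (with Lemma~\ref{lem:4}). I first fix a convenient parametrization. If $\mathcal{D}_{i,\mathrm{ML}}$ is bounded then $\mathcal{D}_{i,\mathrm{DP}}=\{\x_i\}$ and there is nothing to show, so assume $\x_i\in\bd\chi$. Using $b_{i,j}=\aaa_{i,j}^{T}(\x_i+\x_j)/2$, $[\ccc_{i,\mathrm{DP}}]_j=\|\aaa_{i,j}\|\,d_{i,j}/2$ and $\aaa_{i,j}=\x_i-\x_j$, one checks that $b_{i,j}+[\ccc_{i,\mathrm{DP}}]_j=\aaa_{i,j}^{T}\x_i$ for every $\x_j\in\mathcal{S}_i$; hence in \eqref{eq:dpcirlineq} the value $\Deee_i=\mathbf{0}$ corresponds exactly to the vertex $\x_i$, and in general $\Deee_i=\A_i(\x-\x_i)$. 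Thus $\mathcal{D}_{i,\mathrm{DP}}=\x_i+K_i$ with $K_i=\{\vvv\in\mathbb{R}^{2}:\A_i\vvv\succeq\mathbf{0}\}$ the shifted polyhedral angle, which by Lemma~\ref{lem:2} is generated by two rays $\uuu_1,\uuu_2$, each orthogonal to one of the two segments joining $\x_i$ to its neighbours $\x_p,\x_q$ on $\bd\chi$. Every $\x\in\mathcal{D}_{i,\mathrm{DP}}$ is then $\x=\x_i+\delta_1\uuu_1+\delta_2\uuu_2$ with $\delta_1,\delta_2\ge 0$, where $(\delta_1,\delta_2)$ are the two active components of $\Deee_i$; since the normal of each inactive constraint lies in $\mathrm{cone}\{\aaa_{i,p},\aaa_{i,q}\}$, the remaining components of $\Deee_i$ are nonnegative linear combinations of $\delta_1,\delta_2$, so $f$ is strictly increasing in all components of $\Deee_i$ iff $g(\delta_1,\delta_2):=\|\x_i+\delta_1\uuu_1+\delta_2\uuu_2\|^{2}$ is strictly increasing in each of $\delta_1,\delta_2$ on $\{\delta_1,\delta_2\ge 0\}$ (and $f=\sqrt{g}$ shares this monotonicity).

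For the \emph{only if} direction, suppose $f$ (equivalently $g$) is strictly increasing in the components of $\Deee_i$ for every $\x_i\in\chi$. Moving from the vertex ($\Deee_i=\mathbf{0}$) to an arbitrary $\x\in\mathcal{D}_{i,\mathrm{DP}}$ by raising the coordinates one at a time strictly increases $g$, so $\|\x\|>\|\x_i\|$ for all $\x\in\mathcal{D}_{i,\mathrm{DP}}\setminus\{\x_i\}$ and all $i$; Lemma~\ref{lem:3} then forces $\mathbf{0}\in\conv\chi$. Geometrically this is unavoidable because the translated tangent cones $\x_v+\mathrm{cone}(\conv\chi-\x_v)$ over all vertices $v$ intersect exactly in $\conv\chi$, so $\mathbf{0}\notin\conv\chi$ would put $\mathbf{0}$ outside the translated tangent cone at some vertex $\x_l$, i.e. there would be a recession direction $\vvv\in K_l$ of $\mathcal{D}_{l,\mathrm{DP}}$ with $\x_l^{T}\vvv<0$, whence $\|\x_l+t\vvv\|<\|\x_l\|$ for small $t>0$; Lemma~\ref{lem:4} is the tool that pins down such a point $\x_l$ explicitly and shows that the whole of $\mathcal{D}_{l,\mathrm{DP}}$ stays, together with $\chi$, on one side of a hyperplane through the origin.

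For the \emph{if} direction, assume $\mathbf{0}\in\conv\chi$. The key step is to show $\x_i\in K_i^{*}$, i.e. $\x_i^{T}\vvv\ge 0$ for all $\vvv\in K_i$: since $\conv\chi\subseteq\x_i+\mathrm{cone}\{\x_p-\x_i,\x_q-\x_i\}$ (the polygon lies in its translated tangent cone at the extreme point $\x_i$; the collinear degeneracy is treated separately), the hypothesis gives $-\x_i\in\mathrm{cone}\{\x_p-\x_i,\x_q-\x_i\}$, and because $K_i=\{\vvv:(\x_i-\x_p)^{T}\vvv\ge 0,\ (\x_i-\x_q)^{T}\vvv\ge 0\}$ has dual cone $\mathrm{cone}\{\x_i-\x_p,\x_i-\x_q\}$, this is exactly $\x_i\in K_i^{*}$; in particular $\uuu_1^{T}\x_i\ge 0$ and $\uuu_2^{T}\x_i\ge 0$. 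Expanding $g(\delta_1,\delta_2)=\|\x_i\|^{2}+2\x_i^{T}(\delta_1\uuu_1+\delta_2\uuu_2)+\|\delta_1\uuu_1+\delta_2\uuu_2\|^{2}$ gives, for $m\in\{1,2\}$, $\partial g/\partial\delta_m=2\uuu_m^{T}\x_i+2\delta_m\|\uuu_m\|^{2}+2\delta_{3-m}\,\uuu_1^{T}\uuu_2$, whose first two summands are nonnegative by the above; since $\|\cdot\|^{2}$ is strictly convex, each coordinate restriction of $g$ has strictly increasing derivative, so a nonnegative value of $\partial g/\partial\delta_m$ at $\delta_m=0$ already makes $g$ strictly increasing in $\delta_m$. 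The main obstacle, and the place where one must use the geometry of the polyhedral angle and not merely the first-order inequality $\x_i\in K_i^{*}$, is therefore to control the cross term $\delta_{3-m}\,\uuu_1^{T}\uuu_2$ and establish $\partial g/\partial\delta_m\ge 0$ throughout the nonnegative quadrant — this is where I expect the bulk of the work to lie.
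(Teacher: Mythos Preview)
Your necessity argument is fine and close to the paper's: both pass through Lemma~\ref{lem:3} (the paper routes this via Lemma~\ref{lem:4} to locate the bad vertex $\x_l$, essentially the same geometry you sketch with the translated tangent cones).

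For sufficiency, however, the paper does \emph{not} compute derivatives at all, and thereby avoids exactly the cross-term obstacle you isolate. Its device is a reduction to Lemma~\ref{lem:3} on an augmented constellation: given $\yyy_1,\yyy_2\in\mathcal{D}_{i,\mathrm{DP}}$ with $\Deee_{i,1}\prec\Deee_{i,2}$, set $\tilde\chi=\chi\cup\{\yyy_1\}$; then $\mathbf{0}\in\conv\chi\subseteq\conv\tilde\chi$, the DPCIR of $\yyy_1$ in $\tilde\chi$ is the translate $\yyy_1+K_i$ (same matrix $\A_i$, offset shifted by $\Deee_{i,1}$), and $\yyy_2$ lies in it because $\Deee_{i,2}-\Deee_{i,1}\succ\mathbf{0}$. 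Lemma~\ref{lem:3} applied to $\tilde\chi$ then gives $\|\yyy_2\|>\|\yyy_1\|$ directly, with no need to control $\uuu_1^{T}\uuu_2$.

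Two remarks on your route. First, your $(\delta_1,\delta_2)$ are the extreme-ray coefficients, whereas the components of $\Deee_i$ are $\aaa_{i,p}^{T}(\x-\x_i)$ and $\aaa_{i,q}^{T}(\x-\x_i)$; these are related by a linear map but not equal, so increasing one component of $\Deee_i$ while holding the other fixed corresponds to moving along a single ray $\uuu_m$, not to increasing one $\delta_m$ in your expansion --- worth tightening. Second, and more substantively, the cross-term difficulty is real: the polyhedral angle $K_i$ can have opening larger than $\pi/2$ (take a thin triangle containing the origin), so $\uuu_1^{T}\uuu_2<0$ is possible and $\partial g/\partial\delta_m$ at $\delta_m=0$ can be negative for large $\delta_{3-m}$. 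The paper sidesteps this by only ever comparing $\Deee_{i,1}\prec\Deee_{i,2}$ (both components strictly larger), which is precisely what the augmented-constellation trick delivers; a coordinate-wise derivative bound of the kind you set up would be proving something strictly stronger and is not what the paper actually establishes.
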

\begin{proof}
See Appendix \ref{app:thm}
\end{proof}
\section{Symbol-level Precoding Design Problem}\label{sec:slp}

A symbol-level precoder designs the vector to be transmitted at each symbol time via solving a constrained optimization problem. The solution of the SLP problem, i.e., the transmit vector $\uuu$, is in general a function of instantaneous DI and CSI as well as the set of given system constraints or user-specific requirements. When power is a strict system restriction on the downlink transmission, fairness might be a relevant design criterion \cite{tb_conic}. In particular, we are interested in the SLP max-min SINR problem subject to a total power constraint $P_\text{max}$ which aims at maximizing the worst SINR among all users. Assuming the CIRs to be distance preserving, the problem is not convex in its original form. In this section, we derive two alternate convex formulations for this problem. This is done by noticing that the noise-free received signal at each user is a increasing function of two parameters. 

For any user $k=1,...,K$, the symbol $s_k$ corresponds to one of the points $\{\x_i\}_{i=1}^M$ in $\chi$. In the following, we denote by $i_k$ the index of the constellation point corresponding to $s_k$, i.e., $[\Re\{s_k\},\Im\{s_k\}]^T=\x_{i_k}$, where $i_k\in\{1,...,M\}$. Furthermore, vectors $\uuu$ and $\h_k$ are rearranged as $\tilde{\uuu}=[\Re\{\uuu\},\Im\{\uuu\}]^T\in\mathbb{R}^{2N\times1}$ and $\HHH_k=\left[[\Re\{\h_k\},\Im\{\h_k\}]^T,[-\Im\{\h_k\},\Re\{\h_k\}]^T\right]\in\mathbb{R}^{2\times2N}, k=1,...,K$, respectively, such that $\HHH_k\tilde{\uuu}$ represents the noise-free received signal at the $k$-th user's receiver.

The symbol-level SINR for user $k$ is proportional to the instantaneous received power by the $k$-th receiver at each symbol time (recall that the same noise variance $\sigma^2$ is assumed for all $K$ users). Accordingly, the DPCIR-based SLP max-min SINR problem can be formulated as
\abovedisplayskip=1pt
\begin{equation}\label{eq:sb}
\begin{aligned}
\underset{\tilde{\uuu}, \{\Deee_{i_k}\}_{k=1}^K}{\text{maximize}} & \quad\underset{k}{\min}\left\{\tilde{\uuu}^T\HHH_k^T\HHH_k\tilde{\uuu}\right\}_{k\in\mathcal{K}} \\
\text{subject to} & \quad \A_{i_k}\HHH_k\tilde{\uuu}=\sigma\;(\bbb_{i_k}+\ccc_{i_k}+\Deee_{i_k}), k=1,...,K,\\
& \quad \Deee_{i_k}\succeq\mathbf{0}_k, k=1,...,K,\\
& \quad \tilde{\uuu}^T\tilde{\uuu}\leq P_\text{max},
\end{aligned}
\end{equation}
\fontdimen2\font=3.225pt
where the index set $\mathcal{K}\!=\!\{k|k=1,...,K, \x_{i_k}\!\in\!\bd\chi\}$ refers to those users with a symbol in the boundary of their corresponding constellation, and $\mathbf{0}_k$ denotes an all-zeros vector of appropriate dimension. By introducing a slack variable $t$, one can recast \eqref{eq:sb} as
\fontdimen2\font=3.33pt
\abovedisplayskip=1pt
\begin{equation}\label{eq:sbt}
\begin{aligned}
\underset{\tilde{\uuu}, \{\Deee_{i_k}\}_{k=1}^K}{\text{maximize}} & \quad t \\
\text{subject to} & \quad \A_{i_k}\HHH_k\tilde{\uuu}=\sigma\;(\bbb_{i_k}+\ccc_{i_k}+\Deee_{i_k}), k=1,...,K,\\
& \quad \Deee_{i_k}\succeq\mathbf{0}_k, k=1,...,K,\\
& \quad \tilde{\uuu}^T\HHH_k^T\HHH_k\tilde{\uuu}\geq t, k\in\mathcal{K},\\
& \quad \tilde{\uuu}^T\tilde{\uuu}\leq P_\text{max},
\end{aligned}
\end{equation}
which is not convex due to the third set of constraints. As a consequence of Lemma \ref{lem:1} and Lemma \ref{lem:2}, and with respect to \eqref{eq:dpcirlineq}, any point in $\mathcal{D}_{{i_k},\text{DP}}$ can be uniquely specified by $\Deee_{i_k}=[\delta_{i_k,1},\delta_{i_k,2}]^T\in\mathbb{R}^2_+$ for all $\x_{i_k}\in\bd\chi$. It follows from Theorem \ref{thm:1} that $\tilde{\uuu}^T\HHH_k^T\HHH_k\tilde{\uuu}=\|\HHH_k\tilde{\uuu}\|^2$ is strictly increasing in each element of $\Deee_{i_k}, \forall k\in\mathcal{K}$, i.e., assuming either $\delta_{i_k,1}$ or $\delta_{i_k,2}$ to be fixed, $\tilde{\uuu}^T\HHH_k^T\HHH_k\tilde{\uuu}$ is a monotonically increasing function of the other. This suggests that if the optimal value of one of the elements, e.g., $\delta_{i_k,1}$, is given for all $k\in\mathcal{K}$, then the optimization \eqref{eq:sbt} is equivalent to the convex problem
\abovedisplayskip=1pt
\begin{equation}\label{eq:sbtfixed}
\begin{aligned}
\underset{\tilde{\uuu}, \delta_{i_k,2}:k\in\mathcal{K}}{\text{maximize}} & \quad t \\
\text{subject to} & \quad \A_{i_k}\HHH_k\tilde{\uuu}=\sigma\;(\bbb_{i_k}+\ccc_{i_k}+\Deee_{i_k}), k=1,...,K,\\
& \quad \delta_{i_k,2}\geq t, k\in\mathcal{K}, \;\Deee_{i_k}=\mathbf{0}_k, k\notin\mathcal{K},\\
& \quad \tilde{\uuu}^T\tilde{\uuu}\leq P_\text{max},
\end{aligned}
\end{equation}
where the parameter $\delta_{i_k,2}$ is substituted for $ \tilde{\uuu}^T\HHH_k^T\HHH_k\tilde{\uuu}$ in \eqref{eq:sbt}.
In fact, achieving the optimum of \eqref{eq:sbt} requires an exhaustive search over all possible (non-negative) values of $\delta_{i_k,1},\forall k\in\mathcal{K}$ and solving \eqref{eq:sbtfixed} for each choice of $\delta_{i_k,1}$. The optimal solution is then obtained by picking $\delta_{i_k,1}$ for which the objective function is maximum among all other choices. Practically speaking, due to the power limitation induced by $P_\text{max}$, one can bound and discretize the search interval to choose $\delta_{i_k,1},\forall k\in\mathcal{K}$ from a finite set, which of course leads to a sub-optimal solution. The gap to the optimal solution depends on whether the search interval includes the optimal value, and also on the step size of discretization. In general, the smaller the step size is, the higher the computational complexity will be.

Another alternate, but not equivalent, convex formulation for problem \eqref{eq:sbt} is to jointly optimize $\delta_{i_k,1}$ and $\delta_{i_k,2}$ for all $k\in\mathcal{K}$, i.e.,\fontdimen2\font=3.33pt
\abovedisplayskip=1pt
\begin{equation}\label{eq:sbdu}
\begin{aligned}
\underset{\tilde{\uuu}, \{\Deee_{i_k}\}_{k=1}^K}{\text{maximize}} & \quad t \\
\text{subject to} & \quad \A_{i_k}\HHH_k\tilde{\uuu}=\sigma\;(\bbb_{i_k}+\ccc_{i_k}+\Deee_{i_k}), k = 1,...,K,\\
& \quad \Deee_{i_k}\succeq t\:\mathbf{1}_k, k\in\mathcal{K}, \;\Deee_{i_k}= \mathbf{0}_k, k\notin\mathcal{K},\\
& \quad \tilde{\uuu}^T\tilde{\uuu}\leq P_\text{max}.
\end{aligned}
\end{equation}
where $\mathbf{1}_k$ is an all-ones vector. The optimal solution of this problem can be regarded as a lower bound on the optimum of SLP max-min SINR. It should be noted that for PSK constellations, problem \eqref{eq:sbdu} is equivalent to the alternate convex formulation for the SLP SINR balancing provided in \cite{slp_chr}.

\section{Simulation Results}\label{sec:sim}
In this section, we provide the simulation results to evaluate the performance of the two proposed alternate convex formulations for the SLP max-min SINR problem. In the simulations, we have considered a downlink multiuser scenario with $N=K=4$ and $\sigma=1$, where the BS employs the AWGN-optimized 8-ary constellation for all users. For any user $k$, the complex channel vector $\h_k$ follows an i.i.d. complex Gaussian distribution with zero-mean and unit variance. The results are averaged over $500$ symbol slots.

Figure \ref{fig:2} shows the optimized minimum SINR across the users obtained from the joint optimization problem \eqref {eq:sbdu} and from an exhaustive search over interval [0,2.5] with step size 0.5 for all possible combinations of $\delta_{i_k,1},\forall k\in\mathcal{K}$. For the exhaustive search, the number of convex problems to be solved in every symbol time is of order $5^K$. As it can be observed, the loss due to the joint, but convex, optimization is not significant (around 1-1.5 dBW). This loss is, however, a consequence of highly reducing the computational complexity to solving only one convex problem in each symbol time.
\begin{figure}
\centering
\includegraphics[width=.5\columnwidth]{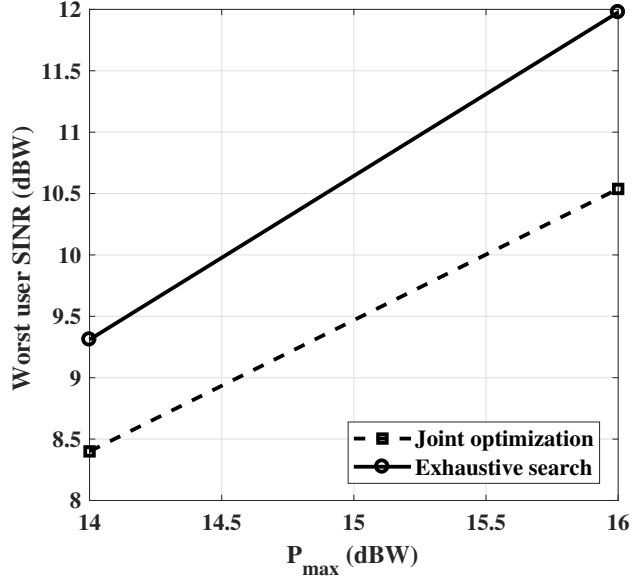}
\caption{The worst user SINR versus total power constraint.}
\label{fig:2}
\end{figure}
\section{Conclusion}\label{sec:conc}
In this paper, we addressed the problem of SLP SINR balancing with max-min fairness criterion in a downlink multiuser MISO channel. 
The original formulation of this problem is known to be non-convex. For the so-called DPCIRs, we proposed two alternate convex formulations. Both formulations are based on the observation that the noise-free received signal power is an strictly increasing function of two parameters for any given unbounded DPCIR, under the necessary and sufficient condition that the convex hull of the constellation contains the origin. The first formulation is solvable via an exhaustive search and even though it provides the optimal solution, it is computationally expensive to be implemented in a realistic scenario. The second formulation, though sub-optimal, reduces the problem to a convex optimization.


\section*{Acknowledgment}
The authors are supported by the Luxembourg National Research Fund under CORE Junior project: C16/IS/11332341 Enhanced Signal Space opTImization for satellite comMunication Systems (ESSTIMS).

\appendices
\section{}\label{app:1}
\subsection{Proof of Lemma \ref{lem:2}}\label{app:lem2}
The intersection of finitely many closed halfspaces is an unbounded polyhedron if and only if the outward normals to the associated boundary hyperplanes lie on a single closed halfspace \cite[p. 20, Theorem 4]{convex_poly}. Accordingly, for any $\x_i\in\chi$ with unbounded $\mathcal{D}_{i,\mathrm{ML}}$, the outward normal vectors $-\aaa_{i,j}, \forall\x_j\in\mathcal{S}_i$ lie on a single halfspace. Since $\mathcal{D}_{i,\mathrm{DP}}$ has the same set of outward normal vectors $-\aaa_{i,j}, \forall\x_j\in\mathcal{S}_i$, it is also unbounded. An unbounded polyhedron is uniquely determined from its vertices and the directions of its infinite edges \cite[p. 31, Theorem 4]{convex_poly}. It is straightforward to check that $\x_i$ is the unique solution of $\A_i \x=\bbb_i+\ccc_{i,\text{DP}}$, i.e., all the hyperplanes have a common intersection point $\x_i$. Therefore, $\mathcal{D}_{i,\mathrm{DP}}$, which is given by the solution set of $\A_i \x\succeq\bbb_i+\ccc_{i,\text{DP}}$, has a single vertex at $\x_i$ and two infinite edges, i.e., a polyhedral angle. In addition, since any two neighboring points share a common Voronoi edge, the two infinite edges of $\mathcal{D}_{i,\mathrm{DP}}$ correspond to the two neighboring points of $\x_i$ on $\bd\chi$ (i.e., $\mathcal{S}_i\cap\bd\chi$) with unbounded Voronoi regions. Each infinite edge of $\mathcal{D}_{i,\mathrm{DP}}$ is then parallel to a hyperplane with normal vector $\aaa_{i,j}=\x_i-\x_j$, where $\x_j\in\mathcal{S}_i\cap\bd\chi$; therefore it is perpendicular to $\x_i-\x_j$. This completes the proof.
\subsection{Proof of Lemma \ref{lem:3}}\label{app:lem3}
\emph{Sufficiency}: Having $\mathbf{0}\in\conv\chi$, let further assume that $\mathbf{0}\in\chi$. This assumption, as mentioned earlier, does not have any impact on $\mathcal{D}_{i,\mathrm{DP}}$ for any $\x_i\in\bd\chi$, regardless of whether $\mathbf{0}\in\bd\chi$ or $\mathbf{0}\in\interior\chi$. By substituting $\x_j=\mathbf{0}$ in Property \ref{pro:0} (iii), for all $\x_i\in\chi$ we have $\|\x\|\geq\|\x_i\|,\forall\x\in\mathcal{D}_{i,\mathrm{DP}}$. This completes the proof of sufficiency.

We use the following well-known property of convex sets to prove the necessity.
\begin{property}\label{pro:1}
$\vvv_o$ is the minimum distance vector from the origin to the convex set $\mathcal{V}$ iff for any vector $\vvv\in\mathcal{V}$ we have $\vvv_o^T\vvv\geq\vvv_o^T\vvv_o$, with equality for $\vvv$ lying on the hyperplane orthogonal to $\vvv_o$ \cite[p. 69, Theorem 1]{opt_vec}.
\end{property}
\emph{Necessity:} By contradiction, if $\mathbf{0}\notin\conv\chi$, let assume a new constellation set $\tilde{\chi}$ having all the points of $\chi$ including the origin, i.e., $\tilde{\chi}=\chi\cup\{\mathbf{0}\}$, hence $\conv\chi\subset\conv\tilde{\chi}$. Clearly, $\mathbf{0}\in\bd\tilde{\chi}$ and according to Lemma \ref{lem:2}, there always exist exactly two constellation points on $\bd\tilde{\chi}$ that $\mathbf{0}$ contributes to their DPCIRs. Suppose $\x_l$ be one of these points with $\mathcal{D}_{l,\mathrm{DP}}$ and $\tilde{\mathcal{D}}_{l,\mathrm{DP}}$ denoting its associated DPCIR relative to $\chi$ and $\tilde{\chi}$, repectively. We denote by $\tilde{\mathcal{S}}_l$ the set of neighboring points of $\x_l$ in $\tilde{\chi}$. Let $\mathcal{H}_{l,o}=\left\{\x\mid\x\in\mathbb{R}^2, \x_l^T\x\geq \x_l^T\x_l\right\}$ be the distance preserving halfspace from $\mathbf{0}$ to $\x_l$. Since $\mathbf{0}\in\tilde{\mathcal{S}}_l$, we have $\tilde{\mathcal{D}}_{l,\mathrm{DP}}=\mathcal{H}_{l,o}\cap\mathcal{D}_{l,\mathrm{DP}}\neq\mathcal{D}_{l,\mathrm{DP}}$, i.e., the halfspace $\mathcal{H}_{l,o}$ does not contain $\mathcal{D}_{l,\mathrm{DP}}$. Hence, $\left\{\x\mid\x\in\mathbb{R}^2, \x_l^T\x=\x_l^T\x_l\right\}$ is not a supporting hyperplane for $\mathcal{D}_{l,\mathrm{DP}}$ at $\x_l$ \cite[p. 51]{convex_boyd}. This implies that there exist some $\x\in\mathcal{D}_{l,\mathrm{DP}}$ for which $\x_l^T\x<\x_l^T\x_l$. According to Property \ref{pro:1} (which gives a necessary and sufficient condition), $\x_l$ is not the minimum distance vector from the origin in $\mathcal{D}_{l,\mathrm{DP}}$. Consequently, $\|\x\|\geq\|\x_l\|$ does not hold for some $\x\in\mathcal{D}_{l,\mathrm{DP}}$ which contradicts $\|\x\|\geq\|\x_l\|,\forall\x\in\mathcal{D}_{l,\mathrm{DP}}$.
\subsection{Proof of Theorem \ref{thm:1}}\label{app:thm}
\emph{Sufficiency:}
Suppose $\mathbf{0}\in\conv\chi$. Assuming a constellation point $\x_i\in\chi$ and its DPCIR $\mathcal{D}_{i,\mathrm{DP}}$, let $\yyy_1$ and $\yyy_2$ be two points in $\mathcal{D}_{i,\mathrm{DP}}$ such that $\A_i \yyy_1=\bbb_i+\ccc_{i,\text{DP}}+\Deee_{i,1}$ and $\A_i \yyy_2=\bbb_i+\ccc_{i,\text{DP}}+\Deee_{i,2}$ with $\Deee_{i,1},\Deee_{i,2}\in\mathbb{R}^{M_i}_+$ and $\Deee_{i,1}\prec\Deee_{i,2}$. Let consider a new constellation $\tilde{\chi}=\chi\cup\{\yyy_1\}$. It is clear that $\conv\chi\subseteq\conv\tilde{\chi}$, and therefore $\mathbf{0}\in\conv\tilde{\chi}$. The DPCIR of $\yyy_1$ can be described as $\mathcal{D}_{\yyy_1,\text{DP}}=\big\{\x\mid\x\in\mathbb{R}^2, \A_i \x=\bbb_i+\ccc_{i,\text{DP}}+\Deee_{i,1}+\Deee_1, \Deee_1\in\mathbb{R}^{M_i}_+ \big\}$. Let $\bar{\Deee}={\Deee}_{i,2}-{\Deee}_{i,1}$, then $\A_i \yyy_2=\bbb_i+\ccc_{i,\text{DP}}+\Deee_{i,1}+\bar{\Deee}, \bar{\Deee}\in\mathbb{R}^{M_i}_{++}$, which means that $\yyy_2\in\mathcal{D}_{\yyy_1,\text{DP}}$. As a consequence, from Lemma \ref{lem:3}, we have $\|\yyy_1\|<\|\yyy_2\|$ and the proof of sufficiency is complete.

\emph{Necessity:} 
By contradiction, suppose $\mathbf{0}\notin\conv\chi$. Then, based on Lemma \ref{lem:4}, there exists a constellation point $\x_l$ for which $\mathbf{0}\notin\conv\tilde{\chi}_{\x_l,\x}, \forall\x\in\mathcal{D}_{l,\mathrm{DP}}$. Let $\yyy_1\in\mathcal{D}_{l,\mathrm{DP}}$, then $\A_l \yyy_1=\bbb_l+\ccc_{l,\text{DP}}+\Deee_{l,1}$ with $\Deee_{l,1}\in\mathbb{R}^{M_l}_+$. The DPCIR associated with $\yyy_1$ can be expressed as $\mathcal{D}_{\yyy_1,\text{DP}}=\big\{\x\mid\x\in\mathbb{R}^2, \A_l \x=\bbb_l+\ccc_{l,\text{DP}}+\Deee_{l,1}+\Deee_1, \Deee_1\in\mathbb{R}^{M_l}_+ \big\}$. Since $\mathbf{0}\notin\conv\tilde{\chi}_{\x_l,\yyy_1}$, it follows from Lemma \ref{lem:3} and Property \ref{pro:1} that there exists $\yyy_2\in\mathcal{D}_{\yyy_1,\text{DP}}$ such that $\A_l \yyy_2=\bbb_l+\ccc_{l,\text{DP}}+\Deee_{l,1}+\bar{\Deee}, \bar{\Deee}\in\mathbb{R}^{M_l}_{++}$, for which $\|\yyy_2\|<\|\yyy_1\|$. But ${\Deee}_{l,1}+\bar{\Deee}={\Deee}_{l,2}$ yields ${\Deee}_{l,2}\succ{\Deee}_{l,1}$ which is a contradiction. This completes the proof.

\end{document}